\newtheorem{thm}{Theorem}[section]
\newtheorem{prop}{Proposition}
\newtheorem{rem}{Remark}[thm]
\begin{document}

\title{On Unbiased Simulations of Stochastic Bridges Conditioned on Extrema}
\date{November, 2019}
\author[]{Andrew Schaug\thanks{andrew.t.schaug@ey.com} \, and Harish Chandra\thanks{harish.chandra@ey.com}}
\affil[]{Quantitative Advisory Services, Ernst \& Young LLP}


\maketitle

\begin{abstract}
Stochastic bridges are commonly used to impute missing data with a lower sampling rate to generate data with a higher sampling rate, while preserving key properties of the dynamics involved in an unbiased way. While the generation of Brownian bridges and Ornstein-Uhlenbeck bridges is well understood, unbiased generation of such stochastic bridges subject to a given extremum has been less explored in the literature. After a review of known results, we compare two algorithms for generating Brown-ian bridges constrained to a given extremum, one of which generalizes to other diffusions. We further apply this to generate unbiased Ornstein-Uhlenbeck bridges and unconstrained processes, both constrained to a given extremum, along with more tractable numerical approximations of these algorithms. Finally, we consider the case of drift and applications to geometric Brownian motions.
\end{abstract}

\section{Background}

Stochastic bridges find applications in a wide range of domains from finance to climatology. It is often necessary to interpolate insufficiently frequent time series data stochastically while preserving as much as possible of the dynamics underlying the original data. A natural question then is how to construct a stochastic bridge, i.e., a stochastic process whose end points are known a priori. These have been used for studying spectral statistics of polarization mode dispersion \cite{S2}, modeling fermions trapped in a trap \cite{DMS}, modeling animal movement patterns \cite{BSAW}, valuing financial securities \cite{CMO} and improving the performance of quasi-Monte Carlo methods for high dimensional problems in computational finance \cite{LW} \cite{K}. Sometimes, there is a further need to generate a bridge subject to additional constraints, such as a given mean (see \cite {M}), or on an extremum. For example, a scenario generator may wish to consider worst-case scenarios where an asset attains a given lowest value to test how a financial instrument would behave under certain unfavorable market conditions, or we may wish to interpolate simulated thermodynamic scenarios where the temperature reaches a certain maximum.

This paper will provide tools for industry to go beyond simple Brownian bridges to allow for some features of more realistic dynamics including drift, geometric Brownian motion and mean reversion, for use in stress tests that condition on extreme values of an asset. For example, the generation of geometric Brownian bridges subject to extrema may be used for the valuation of barrier options whose barrier event has not yet been breached. The algorithms presented in this paper could also be used by banks for data quality remediation problems in VaR models where input time series data is sparse. 

We first show how to generate Wiener processes subject to bridge and/or extremum conditions in an unbiased (i.e., measure-preserving) way, and then provide and compare two methods for generating such processes subject to both bridge and extremum conditions. Our first method is much more rapid, while the second generalizes more broadly. We will then consider mean-reverting dynamics by extending this problem to Ornstein-Uhlenbeck processes. Finally, we consider Wiener processes with drift, and geometric Brownian motions, which underlie the Black-Scholes pricing model.

We now introduce some preliminaries. First, we slightly abuse terminology by allowing a Wiener process to start at an arbitrary initial value $a$. A \textit{Brownian bridge} is a Wiener process $X_t$ over $[0, T]$ conditioned on $X_0 = a, X_T = b$. An Ornstein-Uhlenbeck (OU) process is a simple mean-reverting stochastic process $X^{OU}_t$given by the SDE \begin{equation}\label{ou_sde}dX^{OU}_t = \kappa(\mu - X_t)dt + \sigma dW_t,\end{equation} for some standard Wiener process $W_t$. Then $\mu$ is the mean and $\kappa$ is the \textit{mean reversion rate}. An \textit{Ornstein-Uhlenbeck bridge} is an OU process over a closed interval conditioned on the values at the endpoints.

In section (2) we provide two unbiased constructions of Wiener bridges conditioned on extrema. In section (3) we generalize the second of the methods in section (2) to construct OU bridges conditioned on an extremum, and discuss the numerical and computational concerns and reasonable approximations involved. In section (4) we consider how to construct an \textit{open-ended} (i.e., non-bridge) Ornstein-Uhlenbeck process constrained on an extremum. In section (5) we consider the case of Wiener processes with drift, or equivalently geometric Brownian motions.

First, we discuss those problems already addressed in the literature (1-5), and then those which we address in this paper (6-9).

\begin{enumerate}
\item For the sake of completeness, a Wiener process without bridge or extremum conditions has a trivial construction. Computationally, the Euler-Maruyama method may be used, successively incrementing by $\sigma \sqrt{\Delta t}\xi$, where $\xi \sim \mathcal{N}(0, 1)$.
\item An OU process without bridge or extremum conditions may be constructed from a Wiener process $W_t$ directly via equation \ref{ou_sde}. Computationally, the Euler-Maruyama method may be used.
\item A Brownian bridge may be constructed as follows: Given a Wiener process $W_t$ with variance $\sigma^2$ and such that $W_0 = 0$, we set $$X_t = a + \frac{b-a}{T}t + W_t - \frac{W_T}{T} t.$$
Then $X_t$ is a Brownian bridge on $[0, T]$ from $a$ to $b$, also with variance $\sigma^2$. Furthermore, the induced mapping $W_t \mapsto X_t$ has the correct measure, that is, the construction of $X_t$ is unbiased provided the simulation of $W_t$ is unbiased. Equivalently, the SDE for such a Brownian bridge is given by \begin{equation}X_t = \frac{b-a-X_t}{T}dt + dW_t,\label{sde_bb}\end{equation} with initial condition $X_0 = a$.
\item An OU bridge may be constructed as follows: Given a Brownian bridge $W_t$ with variance $\sigma^2$ over $[0, T]$ such that $W_0 = 0, W_T = b-a$, set $X_0 = a-\mu$ and simulate the SDE \begin{equation} \label{ou_bridge_sde} dX_t = - \kappa X_t dt + \frac{2\kappa(be^{\kappa(t+T)}-X_te^{2\kappa t})}{e^{2\kappa T}-e^{2\kappa t}}dt + dW_t.\end{equation} Then $\mu + X_t$ is an OU bridge with variance $\sigma^2$ between $a$ and $b$. See \cite{C} for details.

\item An open-ended (i.e. non-bridge) Wiener process conditioned on an extremum may be constructed as follows. To simulate a Wiener process $X_t$ over an interval $[t_0,T]$ starting at $X_{t_0}=a$ and conditioned on a maximum $M$, we follow the solution for the equivalent dual case (i.e., for the minimum) in \cite{BPR}, following results in \cite{BCP}. First, assume without loss of generality that $a=0$. Construct a Brownian bridge $W_{br}^M (t)$ from $0$ to $M$ over the same interval $[t_0,T]$. Then, find $t_M$, the \textit{first} time this bridge hits $M$. Take the first portion of this process up to $t_M$, and append potentially repeated reflections as follows:
$$W_{0, max-refl}^M(t) = \left\{
	\begin{array}{ll}
		W_{br}^M(t)  & \mbox{if } 0 \le t \le t_{\mathrm{max}} \\
		2(\mathrm{min}_{t \le s \le t_M} W_{br}^M(s)) - W_{br}^M(t) & \mbox{if } t_{\mathrm{max}} \le t \le T
	\end{array}
\right.$$

This selects $W_{0,max-refl}^M$ as an unbiased Brownian motion starting at 0 with maximum $M$ on the same interval (and no condition at $T$). To construct this for an arbitrary initial point $a$, simply construct $$W_{a, max-refl}^M(t) = a + W_{0, max-refl}^{M-a}(t).$$

\item We discuss an algorithm that constructs an open-ended OU process conditioned on an extremum in section (4).
\item We demonstrate how to construct a Brownian bridge conditioned on an extremum in section (2), in two different ways, and compare the results.
\item We discuss an algorithm that constructs an OU bridge conditioned on an extremum in section (3).
\item We discuss an algorithm that constructs an open-ended Brownian process with positive drift (or equivalently a growing geometric Brownian motion) constrained on an extremum in section (5). The case  for bridges with drift follows immediately from case 7 above.
\end{enumerate}

\section{Wiener Bridges Conditioned on Extrema}

Without loss of generality, consider the space of Brownian bridges $W_t$ on $[0, T]$, with variance $\sigma^2$, $W_0 = a, W_1 = b$, conditioned on a \textit{maximum} $M$. We construct a generator for such processes, unbiased according to the standard measure (induced from the Wiener measure), in two different ways.

\subsection{Method 1: Construction via Brownian Meanders}

By \cite{I}, a `standard' Wiener meander $W^{me, r}_t$ over $[0, 1]$ with $W^{me, r}_0 = 0$ and $W^{me, r}_1 = r$ may be constructed from three standard Wiener bridges $W_t^{1, 2, 3}$ over $[0, 1]$ from 0 to 0 as follows:
\begin{equation}
\label{meander_eqn}
W^{me, r}_t = \sqrt{(rt + W_t^1)^2 + (W_t^2)^2 + (W_t^3)^2}.
\end{equation}
Note that this is a 3-dimensional Bessel process with drift. Furthermore, this construction is unbiased according to the standard measure.

\begin{rem}
\label{meander_rem}
Note that with the notation above, from the scaling property of Wiener processes and checking the values at the endpoints, it follows that $$W^{me, T, \sigma, (a, b)}_t = a + \sigma\sqrt{T}W^{me, r/(\sigma\sqrt{T})}_t$$ is a Wiener process with variance $\sigma^2$, and values $a$ at $t=0$ and $r$ at $t=T$, with the measure of a conditional Wiener process preserved.
\end{rem}

By \cite{AC}, we note that the joint density of the minimum and location of the minimum of a Brownian bridge with variance 1 from $a$ to $b$ over $[0, T]$ is
\begin{equation*}
\resizebox{.9\hsize}{!}{$p(\mathrm{min}X_t = m, \mathrm{argmin}X_t = \theta) = \frac{(a-m)(b-m)\sqrt{2T}}{\sqrt{\pi \theta^3 (T-\theta)^3}}e^{\frac{(a-b)^2}{2T} - \frac{(a-m)^2}{2\theta}-\frac{(b-m)^2}{2(T-\theta)}}\mathbbm{1}_{\{m<a, m<b\}},$}
\end{equation*}
and the density of the minimum alone is
\begin{equation*}
\frac{2}{T} (a+b-2m)e^{\frac{-2(a-m)(b-m)}{T}}\mathbbm{1}_{\{m<a, m<b\}}
\end{equation*}
so that the conditional density is, for $m<a, m<b$,
\begin{align}
& p(\mathrm{argmin}_{s\in[0, T]} X_s= \theta \vert \mathrm{min}_{s\in[0, T]} X_s = M) \nonumber\\
& \qquad = \frac{1}{\sqrt{2\pi}}\frac{(a-m)(b-m)}{(a+b-2m)}\left(\frac{T}{\theta(T-\theta)}\right)^{3/2}e^{\frac{(b-a)^2}{2T}  + \frac{2(a-m)(b-m)}{T} - \frac{(a-m)^2}{2\theta} - \frac{(b-m)^2}{2(T-\theta)}}. \label{density_argmax_bb}
\end{align}

This only depends on $a, b, m$ through $a-m, b-m, a-b$. Wiener processes with initial value $a$ and standard variation $\sigma$ transform positively and affinely to standard Wiener processes via $x \mapsto x + \frac{x-a}{\sigma}$, so this transformation preserves location of maximum and minimum. This transformation scales all random increments by $\sigma$, so is an isomorphism at the level of measure, and the probability density for the location of the maximum and minimum is preserved.

The minimum of a standard Brownian process is exactly the maximum of its negative, so negating $a, b, m$ in the above formula and setting $M = \mathrm{max}(M_t)$, we find after scaling by $\sigma$ that, for $M>a, M>b$,
\begin{align}
p(\mathrm{argmax}X_t = \theta \vert \mathrm{max} X_t = M) & =  \frac{1}{\sqrt{2\pi}\sigma}\frac{(M-a)(M-b)}{(2M-a-b)}\left(\frac{T}{\theta(T-\theta)}\right)^{3/2} \nonumber\\
\label{density_argmax_wiener} & \times e^{\frac{1}{\sigma^2}(\frac{(b-a)^2}{2T}  + \frac{2(M-a)(M-b)}{T} - \frac{(M-a)^2}{2\theta} - \frac{(M-b)^2}{2(T-\theta)})}.
\end{align}





This presents the following algorithm:

\begin{algorithm}[H]
\caption{Generates an unbiased Wiener process $X_t$ with variance $\sigma^2$ over $[0, T]$ conditioned on maximum $M$, with $X_0 = a, X_T = b$}
\label{alg:meanders}
\textbf{Input.} $\sigma, M, T, a, b$
\begin{algorithmic}
\Procedure{GenerateWienerBridge}{$\sigma$, $M$, $T$, $a$, $b$}
\State $\theta \leftarrow$ a time randomly generated according to equation \ref{density_argmax_bb}
\State $W_t^{\mathrm{left}, i}$ for $i = 1, 2, 3 \leftarrow$ standard Wiener processes
\State $W^{me, \theta, \sigma, (0, M-a)}_t \leftarrow$ Wiener meander generated from $W_t^{\mathrm{left}, i}$ 
\State $X^{left}_t \leftarrow M - W^{me, \theta, \sigma, (0, M-a)}_{T-t}$ \Comment Brownian meander over $[0, \theta]$ from \State \qquad\qquad\qquad\qquad\qquad\qquad\qquad\qquad\;\; $a$ to $M$ 
\State $W_t^{\mathrm{right}, i}$ for $i = 1, 2, 3 \leftarrow$ standard Wiener processes
\State $W^{me, T-\theta, \sigma, (0, M-b)}_{t} \leftarrow$ Wiener meander generated from $W_t^{\mathrm{right}, i}$
\State $X^{right}_t \leftarrow M - W^{me, T-\theta, \sigma, (0, M-b)}_{t}$ \Comment Brownian meander over $[\theta, T]$ from  \State \qquad\qquad\qquad\qquad\qquad\qquad\qquad\qquad\;\; $M$ to $b$
\State $X_t = $ concatenate($X^{left}_t, X^{right}_t$)
\EndProcedure
\end{algorithmic}
\textbf{Output}: $X_t$
\end{algorithm}


\begin{prop}
Steps 1-5 above generate an unbiased Brownian bridge over $[0, T]$ from $a$ to $b$ with variance $\sigma^2$ and maximum $M$.
\end{prop}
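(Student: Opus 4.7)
My plan is to verify the two independent parts of the algorithm: (i) that $\theta$ is sampled from the correct conditional density of the argmax given the maximum, and (ii) that, conditional on argmax $= \theta$ and max $= M$, the concatenated meander construction produces a sample from the correct conditional law of the bridge. Part (i) is essentially by construction, since equation (\ref{density_argmax_wiener}) is exactly the target conditional density, and the derivation from (\ref{density_argmax_bb}) by negation and affine scaling (already justified in the paper as measure-preserving) ensures we are sampling from the right law.

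For part (ii), I would apply the strong Markov property of the Brownian bridge at the (almost surely well-defined and finite) time $\theta = \mathrm{argmax}$. Conditional on $(\mathrm{max},\mathrm{argmax}) = (M,\theta)$, the bridge decomposes into two conditionally independent pieces pinned at $(\theta,M)$: a sub-bridge from $a$ to $M$ on $[0,\theta]$ which stays strictly below $M$ on $[0,\theta)$, and a sub-bridge from $M$ to $b$ on $[\theta,T]$ which stays strictly below $M$ on $(\theta,T]$. It remains to identify each of these conditioned sub-bridges with a Brownian meander in law.

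For the left piece, the map $Y_t = M - X_{\theta - t}$ is a measure-preserving involution on Brownian-bridge paths over $[0,\theta]$ of variance $\sigma^2$, and it sends a bridge from $a$ to $M$ conditioned to stay $\le M$ into a bridge from $0$ to $M-a$ conditioned to stay $\ge 0$, i.e.\ a Brownian meander of length $\theta$, variance $\sigma^2$, and endpoint $M-a$ in the sense of Remark \ref{meander_rem}. By Imhof's representation (\ref{meander_eqn}), together with that scaling remark, this meander is sampled unbiasedly from the three independent standard bridges $W^{\mathrm{left},i}$. Inverting the involution then reproduces $X^{\mathrm{left}}_t$ as written in the algorithm (interpreting the subscript as $\theta - t$). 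The right piece is the direct analogue without time reversal, since on $[\theta,T]$ the maximum is already at the left endpoint, so $M - X^{\mathrm{right}}_t$ is directly a meander from $0$ to $M-b$ of length $T-\theta$.

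The main obstacle I expect is the clean measure-theoretic identification in the preceding paragraph: one must verify that conditioning a variance-$\sigma^2$ Brownian bridge on staying below its terminal value yields exactly the Imhof/3-dimensional-Bessel law after the affine map $Y = M - X_{\theta - \cdot}$, and that the independent pasting of the two conditioned sub-bridges at $(\theta,M)$ recovers the full conditional Wiener-bridge measure given $(\mathrm{max},\mathrm{argmax}) = (M,\theta)$. Once this is granted, integrating out $\theta$ against the density (\ref{density_argmax_wiener}) exchanges the order of conditioning and yields an unbiased Brownian bridge from $a$ to $b$ with variance $\sigma^2$ and maximum $M$, which is the claim.
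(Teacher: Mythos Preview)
Your proposal is correct and takes essentially the same approach as the paper: factorize the conditional law given the maximum as (density of $\mathrm{argmax}$) $\times$ (left piece) $\times$ (right piece), identify each piece with an Imhof meander, and observe that the algorithm samples each factor according to its law. You supply more detail than the paper on the meander identification via the involution $Y_t = M - X_{\theta - t}$ and are more candid about the delicate path-decomposition at the (non-stopping-time) $\mathrm{argmax}$, which the paper dispatches in one line by appealing to stationary independent increments.
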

\begin{proof}
Since Wiener processes have stationary independent increments by definition, the selections of the location of the maximum and unscaled meanders are independent, so that in the Wiener measure we have (suppressing the conditions $W_0 = a, W_T = b$): \begin{multline*}p(W_t \vert \mathrm{max}_{s \in [0, T]} W_s = M)  = p(\mathrm{argmax}_{s \in [0, T]} W_s = \theta \vert \mathrm{max}_{s \in [0, T]} W_s = M)\\ \times p(W_t|_{[0, \theta]} \vert \mathrm{argmax}_{s \in [0, \theta]}W_s = \theta)p(W_t|_{[\theta, T]} \vert \mathrm{argmax}_{s \in [\theta, T]}W_s = \theta).\end{multline*} Each factor was generated according to its law. The scaling preserves the independence and ensures Brownian meanders of the correct variance, maximum and endpoints, so the resulting process is an unbiased Brownian bridge.
\end{proof}

We illustrate a few examples from a Python implementation of \ref{alg:meanders} below. First we generate Wiener bridges on $[0, 1]$ from $a = 1$ to $b = 3$ conditioned on a maximum of 5, with $\sigma = 1$:
\begin{figure}[H]
\centering
\includegraphics[width=0.4\textwidth, angle=0]{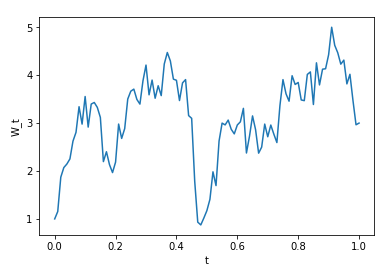} \includegraphics[width=0.4\textwidth, angle=0]{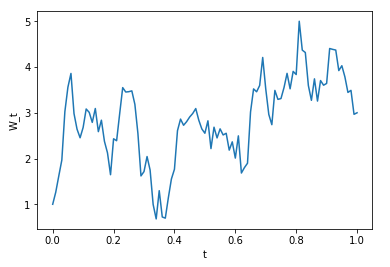}\\
\caption{Plots of Wiener bridges generated on $[0, 1]$ from $a = 1$ to $b = 3$ conditioned on a maximum of 5, with $\sigma = 1$}
\end{figure}

We then run this algorithm to generate Brownian bridges on $[0, 1]$ from $a = 1$ to $b = 3$ conditioned on a maximum of 5, with $\sigma = 5$:
 \begin{figure}[H]
\centering
\includegraphics[width=0.4\textwidth, angle=0]{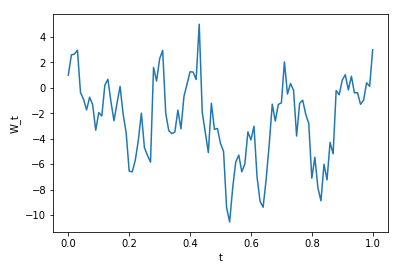} \includegraphics[width=0.4\textwidth, angle=0]{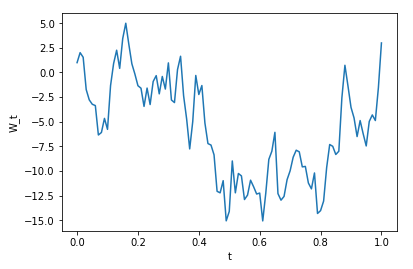}\\
\caption{Plots of Wiener bridges generated on $[0, 1]$ from $a = 1$ to $b = 3$ conditioned on a maximum of 5, with $\sigma = 5$}
\end{figure}

 As a final example, we run the algorithm to generate Brownian bridges on $[0, 1]$ from $a = 1$ to $b = 3$ conditioned on a maximum of 30, with $\sigma = 10$:
\begin{figure}[H]
\centering
\includegraphics[width=0.4\textwidth, angle=0]{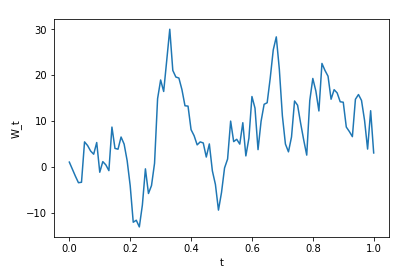} \includegraphics[width=0.4\textwidth, angle=0]{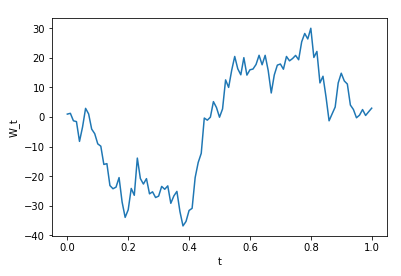}\\
\caption{Plots of Wiener bridges generated on $[0, 1]$ from $a = 1$ to $b = 3$ conditioned on a maximum of 30, with $\sigma = 10$. The maximum is considered attained within a tolerance of $\varepsilon = 0.1$}
\end{figure}

As expected, the process is forced far below the initial and final values, having high volatility but being constrained from above (but not below).

\subsection{Method 2: Construction via Bayesian Increments}

\subsubsection{Procedure}

We consider the full joint density of the maximum and location of the maximum as provided in \cite{AC}, to simulate the Brownian motion $X_t$ from $X_0=a$ with volatility $\sigma$ conditioned on $X_T=b$ and $\mathrm{max}_{t \in [t_0, T]} X_t = M$, without bias.

First, consider $t\in[t_0,T]$ and assume the standard measure. We start with $t = 0$, $X_t = a$, and successively increment $t$ by $\Delta t$. As we proceed there are two cases:

\begin{enumerate}
\item \textbf{The process has not yet achieved its maximum.} That is, we assume $X_s<M$ for all $s\in[t_0,t]$. We consider a discrete time increment $\Delta t$ and wish to find the probability density of $\Delta X$. By Bayes’ Law we have, density-wise,
\begin{align}
& p(X_{t+\Delta t}=X + \Delta X \,\vert\, \mathrm{max}_{s \in [t + \Delta t, T]} X_s = M, X_t = b) \nonumber\\
= & p(X_{t+\Delta t} = X + \Delta X \,\vert\, X_t = b )\nonumber\\
\label{bayes_wiener}& \times \frac{p(\mathrm{max}_{s \in [t, T]} X_s = M \,\vert\, X_{t+\Delta t} = X + \Delta X, X_t = b))}{p(\mathrm{max}_{s\in [t, T]} X_s = M \,\vert\ X_t = b)}
\end{align}

Here the condition $X_{t_0}=a$ pertains to every factor but may be suppressed since $X$ has independent and stationary increments.

We find
$$p(X_{t+\Delta t} = X + \Delta X \,\vert\, X_t = b) = \frac{1}{\sqrt{2\pi\sigma^2\Delta t}} e^{-(\frac{\Delta X - (b-X)}{T-t} \Delta t)^2/(2\sigma^2 \Delta t)},$$
directly from the SDE for an unconditioned Brownian bridge (equation \ref{sde_bb}).

For the remaining two factors, given the density of the minimum of a Brownian bridge as found in \cite{AC}, the symmetry of Wiener processes allows us to find the density of the maximum of a Brownian bridge by negating $a,b$ and $y$:
\begin{align}
& p(\mathrm{max}_{s\in [t_0, T]}X_s = M \, \vert \, X(t_0) = a, X_t = b)) \nonumber \\
= & \frac{2}{T-t_0} (2M-a-b) e^{\frac{-2(a-M)(b-M))}{T-t_0}} \mathbbm{1}_{\{ M \ge a \}} \mathbbm{1}_{\{M\ge b\}}. \label{density_max_bb}
\end{align}

Briefly suppressing the condition $(X_{t+\Delta t} = X + \Delta X, X_t = b)$, we consider whether the (almost surely unique) maximum is attained between $[t, t+\Delta t]$ or $[t+\Delta t, T]$, and further note that
\begin{align*}
 & p(\mathrm{max}_{s\in [t, T]}X_s = M) \\
 & = p(\mathrm{max}_{[t, t+\Delta t]} X_t = M \vert X_{t+\Delta t} = X_t + \Delta X_t) P(\mathrm{max}_{[t+\Delta t, T]} X_t \le M \vert X_T = b) \\
& + P(\mathrm{max}_{[t, t+\Delta t]}X_t \le M \vert X_{t+\Delta t} = X_t + \Delta X_t)p(\mathrm{max}_{[t+\Delta t, T]} X_t = M \vert X_T = b).
\end{align*}

We integrate to find
\begin{align}
&  P(\mathrm{max}_{[t_1, t_2]}X_t \le M ) \nonumber\\
& =  \int_{-\infty}^M p(\mathrm{max}_{s \in [t+\Delta t, T]} X_s \in dM' \,\vert\, X_{t+\Delta t} = X + \Delta X) dM' \nonumber\\
& = 1-e^{-2(M-a)(M-b)/T}. \label{prob_bound_bb}
\end{align}


\begin{rem}
As $\Delta t \to 0$, $$p(\mathrm{max}_{[t, t+\Delta t]} X_t = M \vert X_{t+\Delta t} = X+\Delta X) \to 0$$ and $$\vert p(\mathrm{max}_{[t, t+\Delta t]}X_t \le M \vert X_{t+\Delta t} = X + \Delta X) -1\vert \to 0$$ like $\mathcal{O}(\frac{k_1}{\Delta t}e^{-k_2/{\Delta t}})$, and so both are dominated as $\Delta t \to 0$ by the remaining factors in the full quotient, which take the form $$k_1 \frac{T-t-\Delta t}{T-t}e^{k_2(\frac{1}{T-t-\Delta t} - \frac{1}{T-t})}.$$
Thus, for the purposes of computation, as $\Delta t \to 0$ we can take the maximum over $[t+\Delta t, T]$ instead of $[t, T]$ for the numerator, simplifying the expression in implementation when $\Delta t$ is very small. In this paper we keep the full expression.
\end{rem}

This gives us an SDE in the first case, as in the algorithm we can select $dX$ at random from this distribution:
\begin{align*}
& p(X_{t+\Delta t} = X + \Delta X\, \vert \, \mathrm{max}_{s \in [t+\Delta t, T]} X_s = M, X_t = b) \\
= & \frac{T-t_0}{T-t_0-\Delta t} \frac{2M-X-\Delta X - b}{2M-X-b} e^{2\frac{(X-M)(b-M) - (X+\Delta X - M)(b-M)}{(T-t_0)(T-t_0-\Delta t)}}\\
& \times \frac{1}{\sqrt{2\pi \sigma^2 \Delta t}}e^{-\frac{-(\Delta X - \frac{b-X}{T-t}\Delta t)^2}{2\sigma^2 \Delta t}}\mathbbm{1}_{\{M \ge X + \Delta X\}}\mathbbm{1}_{\{M\ge b\}}.
\end{align*}

\item \textbf{The process has achieved its maximum already.} In this case we do not require the maximum to be attained on $[t, T]$ (in fact, it almost surely is not), and just require that $X_t$ is bounded from above by $M$ there, so that we have
\begin{multline*}
P(\mathrm{max}_{s\in [t, b]}X_s \le M \vert X_{t+\Delta t} = X + \Delta X)\\
= P(\mathrm{max}_{s \in [t, t+\Delta t]} X_s \le M \vert X_{t+\Delta t} = X_t+ \Delta X_t)\\
 \times P(\mathrm{max}_{s\in [t+\Delta t, T]} X_s \le M \vert  X_T = b)
\end{multline*}
which we can find explicitly from equation \ref{prob_bound_bb}.

\item \textbf{Endpoint.} Until now we have assumed that the $X_t$ had an interval of nonzero length to its right. Finally, after incrementing through all but the last value, we must set $X_T = b$.
\end{enumerate}

The whole procedure follows similarly for minima. However, there are numerical considerations beyond those for the usual bounds and stepsizes in numerical integration. For example, to ensure that truncation error does not hinder the pre-extremum/post-extremum bifurcation in the algorithm, we set a range of acceptable closeness $\varepsilon$ to the extremum when we check whether the extremum has been attained. When incrementing, we should also set a range of $L$ possible increments to select from, weighted by the correct probabilities.

\subsubsection{Rectification}

After approximating a Wiener process meeting the conditions in an unbiased way, a simple rectification of the conditions as follows is justifiable for practical purposes. That is, given a range of acceptable closeness to the maximum, $\varepsilon$, to ensure that the error does not hinder the pre-extremum/post-extremum bifurcation in the algorithm in this subsection, we scale the final process as follows.

Let $\tilde{M}$ be the actual maximum of the numerically generated bridge. First we adjust to ensure the maximum is attained exactly (rather than within $\varepsilon$) by postcomposing with $$x \mapsto a + \frac{M-a}{\tilde{M}-a}(x-a).$$ Denote the new right-hand end point by $\tilde{b}$. Then, to adjust the final value back to $b$, we replace only that section of the bridge after the maximum by its postcomposition with $$x \mapsto M + \frac{M-b}{M-\tilde{b}}(x-M).$$ Then we have traded one error for another: we have slightly adjusted the dynamics, which may be viewed as having a slight effect on $\sigma$, but the more visible endpoints and maximum are held to exactly.

\subsubsection{Implementation}

We present the algorithm below, which may be applied to more general processes.

\begin{algorithm}[H]
\caption{Bridge generation constrained to a given maximum}
\label{alg:bayesian_increments}
\textbf{Input.} $t_0, T, a, b, M,$ dynamical parameters,
$N_{\mathrm{timesteps}}, \delta, \varepsilon, L$
\begin{algorithmic}
\Procedure{DensityMax}{max; $t_1$, $t_2$, $x_1$, $x_2$}
\State density $\leftarrow$ conditional density of bridge maximum max
\State \qquad\qquad\; (for Brownian bridges, given by equation \ref{density_max_bb})
\State \Return density
\EndProcedure
\Procedure{ProbBound}{bound; $t_1$, $t_2$, $x_1$, $x_2$}
\State density $\leftarrow \int_{\mathrm{min}(a, b)}^{\mathrm{bound}} DensityMax(M'; t_1, t_2, x_1, x_2) dM'$
\State \Return density
\EndProcedure
\Procedure{DensityIncrementUnconstrained}{$dx; t, dt, x$}
\State density $\leftarrow$ density of increment determined from unconstrained SDE
\State \qquad\qquad\;(for Brownian bridges, given by equation \ref{sde_bb})
\State \Return density
\EndProcedure
\Procedure{DensityIncrementMax}{$dX; t, dt, t, X, b$, maxAttained}
\State pBoundLeft $\leftarrow$ ProbBound(max; $t, t+dt, X, X+dX$)
\State pBoundRight $\leftarrow$ ProbBound(max; $t+dt, T, X+dX, b$)
\State pMax $\leftarrow$ DensityMax(max; $t, T, X, b$)
\If{not maxAttained}
\State pMaxLeft $\leftarrow$ DensityMax(max; $t, t+dt, X, X+dX$)
\State pMaxRight $\leftarrow$ DensityMax(max; $t+dt, T, X+dX, b$)
\State density $\leftarrow$ DensityIncrementUnconstrained($dX; t, dt, X$)*
\State \qquad\qquad\;\;(pMaxLeft*pBoundRight+pBoundLeft*pMaxRight)/pMax
\Else
\State density $\leftarrow$ DensityIncrementUnconstrained($dX; t, dt, X$)*
\State \qquad\qquad\;\;(pBoundRight+pBoundLeft)/pMax
\EndIf
\State \Return density
\EndProcedure
\Procedure{Main}{Input}
\State $dt \leftarrow (b-a)/N_{\mathrm{timesteps}}$
\State $X_0 \leftarrow a$ \Comment first bridge point
\State maxAttained $\leftarrow$ False
\State $X_{\mathrm{min}} \leftarrow$ $X$ such that
\State \qquad \qquad $\int_{-\infty}^{X_{\mathrm{min}}}$ DensityIncrementUnconstrained($t, t+\Delta t, X, X') dX' < \delta$
\For{$i =1:N_{\mathrm{timesteps}}-2$}
\State $dX_{\mathrm{min}} \leftarrow X_{\mathrm{min}} - X_i$
\State $dX_{\mathrm{max}} \leftarrow M - X_i$
\State $dX \leftarrow$ random variable selected according to
\State \qquad DensityIncrementMax($t, dt, X_i, dX_i, b$, maxAttained) over interval
\State \qquad $[dX_{\mathrm{min}}, dX_{\mathrm{max}}]$ and $L$ steps
\If{$\vert M-X_i\vert < \varepsilon$}
\State maxAttained = True
\EndIf
\State $X_{i+1} \leftarrow X_i + dX_i$
\EndFor
\State $X_{N_{\mathrm{timesteps}}} \leftarrow b$ \Comment last bridge point
\State Optional: rectify the vector $(X_i)$ by scaling as in section 2.2.2
\EndProcedure
\end{algorithmic}
\textbf{Output.} A time vector $X_i$, approximating the bridge with given dynamics over $[t_0, T]$ from $a$ to $b$ constrained to the maximum $M$.
\end{algorithm}

The choice of $X_{\mathrm{min}}$ may be decided case by case. In general, one approach might be to find when the cumulative probability of the minimum (without constraint to a maximum) $$P(\mathrm{min}_{s \in [a, b]}X_s \le X_{\mathrm{min}} \vert X_0 = a, X_T = b) < \delta,$$ which in turn can be found directly from the probBound function by considering the dual case of the process $-X_t$. In the case of the Brownian bridge constrained to a maximum from $a$ to $b$ at a given $t$, we have $$1-L = e^{-2(a-X_{\mathrm{min}})(b-X_{\mathrm{min}})/T}.$$ Solving for $X_{\mathrm{min}}$, we find $$\frac{a+b}{2} - \frac{1}{2} \sqrt{(a-b)^2-2T\ln (1-L)},$$ the only one of the two solutions less than $a, b$. It would also be possible to compute a new $X_{\mathrm{min}}$ at each step over the interval $[X_t, b]$, but this is slightly more computationally expensive and relies on previous simulations rather than taking account of $a$.

Below we show corresponding unrectified and rectified examples of Wiener bridges generated under method 2 (construction by Bayesian increments) over the time interval $[0, 2]$ from 3 to 4 with standard deviation $1$, conditioned on maximum $5$. We take $N_{\mathrm{timesteps}} = 100, \delta = 10^{-3}$ and $\varepsilon = 0.2$, with $L = 10000$ the number of possible increments from which to make the density-weighted selection at each increment.

\begin{figure}[H]
\centering
\includegraphics[width=0.4\textwidth, angle=0]{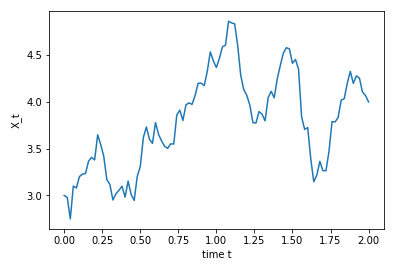} \includegraphics[width=0.4\textwidth, angle=0]{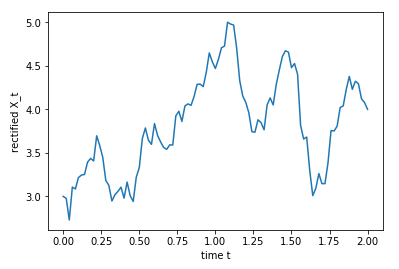}\\
\includegraphics[width=0.4\textwidth, angle=0]{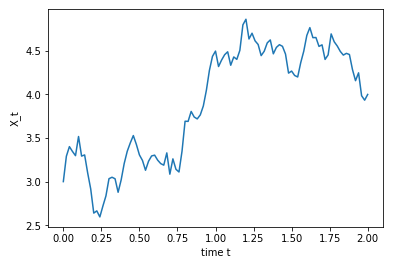} \includegraphics[width=0.4\textwidth, angle=0]{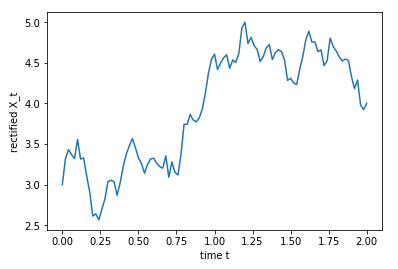}\\
\includegraphics[width=0.4\textwidth, angle=0]{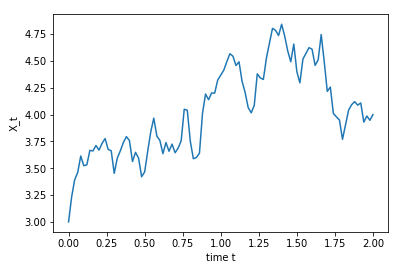} \includegraphics[width=0.4\textwidth, angle=0]{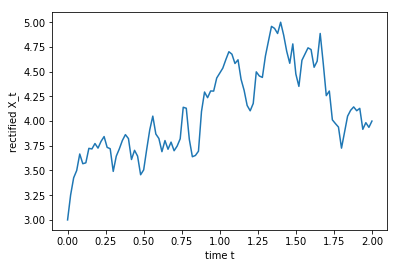}\\
\caption{Plots of unrectified and rectified Wiener bridges generated by method 2 on $[0, 1]$ from $a = 1$ to $b = 3$ conditioned on a maximum of 5, with $\sigma = 1$}
\end{figure}

We note that the results indeed generate bridges subject to the maximal condition, the apparent variance behaves as expected, and the difference between corresponding unrectified and rectified bridges is minor upon inspection.

To test that the numerical generation is reliable, we may verify that the distribution of the argmax also follows the result from \cite{AC}:

\begin{figure}[H]
\centering
\includegraphics[width=0.4\textwidth, angle=0]{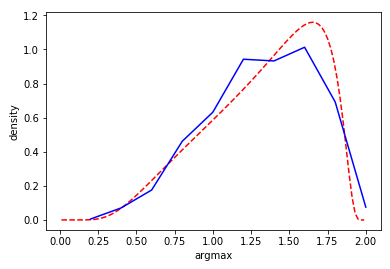}\\
\caption{Density of the location of the maximum of a Brownian bridge from 3 to 4 conditioned on maximum 5, variance 1. Solid (blue): Experimental frequency after 1000 runs as generated by Method 2. Dashed (red): Theoretical density based on equation \ref{density_argmax_wiener}.}
\end{figure}

These visibly agree quite closely. We conclude the fit is good and the simulation is successful for the choices made for these numerical parameters.

We note that for the same parameters, Method 1 (construction via Brownian meanders) took 0.004 seconds to generate the bridge and Method 2 (construction via Bayesian increments) took 10.14 seconds \thanks{using a 2.6GHz Intel Core i5-7300U CPU}. Both methods are linear in the number of timesteps. Clearly the method using meanders is much faster, but the method of Bayesian increments is more generalizable whenever the unconditioned SDE and the density of the extremum are known.

\section{Ornstein-Uhlenbeck bridges conditioned on extrema}


We seek a construction of an OU-process $X$ from $X_{t_0}=a$  to $T$ with volatility $\sigma$ conditioned on $X_T=b$ and $\mathrm{max}_{t_0 \le s \le T} X_s = M$, without bias.
Again, consider $t\in [t_0, T]$ and assume the standard measure.

We note that it is possible to run through algorithm \ref{alg:bayesian_increments} as before, provided we have the density for a given increment in an unconstrained OU bridge, and the density of the maximum of a general OU bridge. The solution of the latter is a difficult problem, and unfortunately one convergent solution we discuss requires multiple nested loops to compute. Even with a more easily computable approximation, both Python and C++ implementations are still impractically slow on a 2.6GHz Intel CPU. Analysis of results using a GPU are pending further research.

Here and elsewhere, we note that the location of an extremum for any diffusion process is almost surely unique (see \cite{KS}).




We find in the unconditioned OU bridge case that
$$\Delta X \sim \mathcal{N}((-\kappa X_t +2\kappa \frac{(b-\mu)e^{\kappa(T+t)}-X_t\kappa e^{2\kappa t}}{e^{2\kappa T}-e^{2\kappa t}})\Delta t, \sigma \sqrt{\Delta t}),$$
that is
\begin{align*}
& p(\mathrm{max}_{s \in [t+\Delta t, T]} X_s = M \,\vert\, X_{t+\Delta t} = X + \Delta X, X_t = b))\\
&= \frac{1}{\sqrt{2\pi \Delta t} \sigma}e^{(\Delta X_t + (\kappa X_t -2\kappa \frac{(b-\mu)e^{\kappa(T+t)}-X_t\kappa e^{2\kappa t}}{e^{2\kappa T}-e^{2\kappa t}})\Delta t)^2/2\sigma^2\Delta t}.
\end{align*}

We find from \cite{CFS} that for a general linear diffusion we have the joint probability density form
\begin{align*}
p_{joint} := &\; p(\mathrm{max}_{t \in [t_0, t_1]} X_t = M,X_t=X_t,\mathrm{argmax}_{t \in [t_0, t_1]} X_t=T)\\
 = & \;n(t, M-a)n(T-t, M-b)S(M)m(db)dt,
 \end{align*}
where $S(y)$ is the scale function, $m(dz)$ is the speed measure for the unconditioned OU process, and $n(s; y)$ is the density of the hitting time of $y$ at $s$ for a process with the same dynamics starting at $0$.
Our $M$ and $b$ are given, so we normalize this by $$\int_a^{\infty}\int_{-\infty}^M p_{joint}(M, b, t) db dM$$ to find the conditional probability density.

For linear diffusions with Markov processes given by $$\mathcal{L} f(x) = \frac{1}{2}\sigma(x)^2 f''(x) + b(x)f'(x) - cf(x),$$ we have
$$S(x) = \int_c^x e^{-B(y)}dy, \quad m(dx) = \frac{2}{\sigma^2}e^{B(x)}dx,$$ where $$B(x) = \int_0^x \frac{2}{\sigma^2(y)}b(y)dy.$$
The Markov generator of the OU process is $$\mathcal{L} f(x) = \frac{1}{2}\sigma^2 f''(x) + \kappa (\mu - x)f'(x).$$ Therefore, given we set the process to start at time 0, we may take $$S(x) = \int_0^x e^{\frac{\theta}{\sigma^2}(y^2-2\mu y)}dy, \quad m(x) = \frac{2}{\sigma^2}e^{\frac{2}{\sigma^2}\theta(\mu x - \frac{1}{2}x^2)}.$$

The hitting time $n_x(s;y)$ does not have a known easily tractable ``analytic'' solution but may be efficiently given as the limit of a sequence given by Lipton and Kaushansky (2017) \cite{LK} based on a method found in Linz \cite{L}, or approximated by other methods detailed further below.

Lipton and Kaushansky proceed by transforming the hitting time problem to a system of equations whose solution depends on the solution to Volterra equations. In particular, they first normalize the OU process to a simpler OU process
$$ dX_t = \kappa (\mu - X_t) dt + \sigma dW_t \quad \leftrightarrow \quad  dX_t = -X_t dt + dW_t,$$
with the same initial value, and then demonstrate that the hitting time $\mathbb{P}_x$-density of a thus normalized OU process starting at $a$ hitting $b$ and evaluated at $t$, is
\begin{align*}
n_x(t;a, b) = & 4e^{-2t} \int_0^{\theta} (-3(\theta - \theta')(2-\theta-\theta') + (a(1-\theta) - b(1-\theta'))^2)\\
& \times (a(1-\theta) - b(1-\theta))\frac{\mathrm{exp}(-\frac{(a(1-\theta) - b(1-\theta'))^2}{(\theta-\theta')(2-\theta-\theta')})(1-\theta')\nu^b(\theta')}{\sqrt{\pi (\theta-\theta')^7(2-\theta-\theta')^7}}d\theta'\\
& + 4e^{-t} \int_0^{\theta} (-(\theta - \theta')(2-\theta-\theta') + (a(1-\theta) - b(1-\theta'))^2)\\
& \times \frac{\mathrm{exp}(-\frac{(a(1-\theta) - b(1-\theta'))^2}{(\theta-\theta')(2-\theta-\theta')})(1-\theta')\nu^b(\theta')}{\sqrt{\pi (\theta-\theta')^5(2-\theta-\theta')^5}}d\theta',
\end{align*}
for $\theta = 1-e^{-t}$, and where $\nu^b(\theta)$ is the solution to a Volterra-type integral equation of the second kind,

$$\nu^b(\theta) - \frac{2b}{\sqrt{\pi}} \int_0^{\theta} \frac{e^{-b^2\frac{\theta-\theta'}{2-\theta-\theta'}}(1-\theta')\nu^b(\theta')}{\sqrt{(2-\theta-\theta')^3(\theta-\theta')}}d\theta'  - 1 = 0.$$

Thus to find the required density, it remains to solve for $\nu^b(\theta)$ above.

We define $$K(\theta, \theta') = -\frac{e^{-b^2\frac{\theta-\theta'}{2-\theta-\theta'}}(1-\theta')\nu^b(\theta')}{\sqrt{(2-\theta-\theta')^3(\theta-\theta')}},$$ the kernel of this integral operator.

Lipton and Kaushansky propose a ``block by block'' method based on a method found in \cite{L}, based on quadratic interpolation, as follows.

First, define the following functions
\begin{align*}
& \alpha(x, y, z) = \frac{z}{2}\int_0^2 \frac{(1-s)(2-s)}{\sqrt{x-y-sz}}ds,\\
& \beta(x, y, z) = z\int_0^2\frac{s(2-s)}{\sqrt{x-y-sz}}ds,\\
& \gamma(x, y, z) = \frac{z}{2}\int_0^2\frac{s(s-1)}{\sqrt{x-y-sz}}ds.
\end{align*}

For faster implementation, these evaluate to
\begin{align*}
& \left\{
	\begin{array}{r}
		\alpha (x, y, z) \\
		\beta  (x, y, z)\\
		\gamma  (x, y, z)
	\end{array}\right\}
= \left\{
	\begin{array}{r}
		-1\\
		2\\
		-1
	\end{array}\right\}[\frac{1}{z}\xi^5 + \frac{1}{3}(-2\frac{x-y}{z^2} + \left\{\begin{array}{r}
		3\\
		2\\
		1
	\end{array}\right\}\frac{1}{z})\xi^3   \\
&\qquad\qquad\qquad\qquad +((\frac{x-y}{z})^2-\left\{\begin{array}{r}
		3\\
		2\\
		1
	\end{array}\right\}\frac{x-y}{z} + \left\{\begin{array}{r}
		2\\
		0\\
		0
	\end{array}\right\})\xi]_{\sqrt{x-y}}^{\sqrt{x-y-2z}}.
\end{align*}

Further, define $$w_{n, i} = \left\{
	\begin{array}{ll}
		(1-\delta_{i, n-1})\alpha(t_n, t_i, h) + (1-\delta_{i,0})\gamma(t_n, t_i-2h, h),  & \mbox{if } i \equiv_2 0 \\
		\beta(t_n, t_i-h, h), & \mbox{if } i \equiv_2 1
	\end{array}
\right.$$

Then for timesteps $0 = t_0 < t_1 < \ldots < t_N = T$, and a desired stepsize $h$, set $F_0 = f(0)$, repeat the following steps for $m=0$ to $N/2$:
\begin{enumerate}
\item For $i = 0, \ldots, 2m$, compute $w_{2m+1, i}, w_{2m+2, i}$.
\item Solve the following system of equations for $[F_{2m+1}, F_{2m+2}]$ and append them to the list of known $F_k$:
\begin{align*}
F_{2m+1} = 1 & + (1-\delta_{m0}) \sum_{i=0}^{2m} w_{2m+1, i}K(t_{2m+1}, t_i)F_i\\
				& + \alpha(t_{2m+1}, t_{2m}, \frac{h}{2})K(t_{2m+1}, t_{2m})F_{2m}\\
				& + \beta (t_{2m+1}, t_{2m}, \frac{h}{2})K(t_{2m+1}, \frac{t_{2m}+t_{2m+1}}{2})\\
				& \qquad \qquad \times (\frac{3}{8}F_{2m} + \frac{3}{4}F_{2m+1} - \frac{1}{8} F_{2m+2})\\
				& + \gamma(t_{2m+1}, t_{2m}, \frac{h}{2})K(t_{2m+1}, t_{2m+1})F_{2m+1},\\
F_{2m+2} = 1 & + (2-\delta_{m0})\sum_{i=0}^{2m} w_{2m+2, i}K(t_{2m+2}, t_i)F_i\\
				& + \alpha(t_{2m+2}, t_{2m}, h)K(t_{2m+2}, t_{2m})F_{2m}\\
				& + \beta(t_{2m+2}, t_{2m}, h)K(t_{2m+2}, t_{2m+1})F_{2m+1}\\
				& + \gamma(t_{2m+2}, t_{2m}, h) K (t_{2m+2}, t_{2m+2})F_{2m+2}.
\end{align*}

Explicitly, at each stage we set
\begin{align*}
&[\beta K]_m = \beta(t_{2m+1}, t_{2m}, \frac{h}{2})K(t_{2m+1}, \frac{t_{2m}+t_{2m+1}}{2})\\
&A_m = -1 + \frac{3}{4}[\beta K]_m + \gamma(t_{2m+1}, t_{2m}, \frac{h}{2})K(t_{2m+1}, t_{2m+1})F_{2m+1},\\
&B_m = -\frac{1}{8}[\beta K]_m\\
&C_m = 1 + (1-\delta_{m0})\sum_{i=0}^{2m} w_{2m+1, i} K(t_{2m+1}, t_i)F_i \\
& \qquad\qquad\qquad\qquad + \alpha(t_{2m+1}, t_{2m}, \frac{h}{2})K(t_{2m+1}, t_{2m})F_{2m} + \frac{3}{8}[\beta K]_m F_{2m},\\
&A'_m = \beta(t_{2m+2}, t_{2m}, h)K(t_{2m+2}, t_{2m+1}),\\
&B'_m = \gamma(t_{2m+1}, t_{2m}, \frac{h}{2})K(t_{2m+2}, t_{2m+2}),\\
&C'_m =  1 + (1-\delta_{m0}) \sum_{i=0}^{2m}w_{2m+1, i} K(t_{2m+1}, t_i)F_i\\
& \qquad\qquad\qquad\qquad + \alpha(t_{2m+2}, t_{2m}, h)K(t_{2m+2}, t_{2m})F_{2m}.
\end{align*}
Then we have
\begin{align*}
&F_{2m+1} = \frac{-C_mB'_m + C'_mB_m}{A_mB'_m - A'_mB_m},\\
&F_{2m+2} = \frac{-A_mC'_m + A'_mC_m}{A_mB'_m - A'_mB_m}.
\end{align*}
\end{enumerate}

Then as $N \to \infty$, $F_{2m+1}, F_{2m} \to \nu^b$ pointwise on $t_i$, $0\le i \le N$.

All functions and operations used to find the incremental probability density are continuous, and thus it follows that algorithm \ref{alg:bayesian_increments} simulates an unbiased OU process with the given bridge and extremum conditions.



For implementation purposes, we set $X_{\mathrm{min}}(\delta)$ by solving for $$P(\mathrm{min}_{s \in [a, b]}X_s \le X_{\mathrm{min}} \vert X_0 = a, X_T = b) < \delta$$ numerically by minimizing $$\vert 1 - P(\mathrm{max}_{s\in [a, b]} \tilde{X}_s \le -X_{\mathrm{min}} \vert X_0 = -a, X_T = -b) - \delta \vert,$$ given by the methods in this subsection, and where the dynamics of $\tilde{X}_t = -X_t$ are given by parameters $-\theta, -\kappa, \sigma$.  

We may cut down on computation time by using an approximation of the hitting time $n_x(s, y)$ found in \cite{LK} from first-order truncation of the governing Volterra equation to an Abel equation and solving via Laplace transforms. This does not formally converge to the correct result, but greatly decreases computation time, and holds for $t\ll1$. Since the values of $t$ in this context are within the range of the increment $[0, \Delta t]$, we make this assumption. We set $$\nu^{\tilde{M}}(\vartheta) = 2e^{\frac{\tilde{M}^2\vartheta}{2}}\mathcal{N}(\tilde{M}\sqrt{\vartheta}),$$ where $\vartheta = 1-e^{-t}$ and $\tilde{M} = \frac{\sqrt{\kappa}}{\sigma}(M-\mu).$

As before, after approximating an OU process meeting the conditions in an unbiased way, applying the same simple rectification in the algorithms may be justifiable for practical purposes, if the maximum is desired to be reached on one of the actual numerical increments. In this case, such rectification trades errors the constraint variables ($a, b, M$) for errors in the dynamical variables $(\mu, \kappa, \sigma)$, in different ways on either side of the actual maximum.

\section{Open-Ended Ornstein-Uhlenbeck processes conditioned on extrema}

The case of open-ended OU processes $X_t$ conditioned on extrema (without loss of generality, a maximum $M$) proceeds mostly as in algorithm \ref{alg:bayesian_increments}, except that we do not specify a last value $b$ after the loop, and pMaxLeft and pBoundLeft are computed quite differently from pMaxRight and pBoundRight. In particular, pMaxLeft and pBoundLeft are computed using the density and distribution formulae for the bridge case as in section 3 (since for these we condition on the endpoints of the interval $[t, t+\Delta t]$), while pMaxRight and pBoundRight are computed using the density and distribution for the open-ended case (since we no longer have the condition $X_T = b$).

The density of the increment of the unconstrained open-ended OU process $P(X_{t+\Delta t} = X + \Delta X )$ is given by $$\frac{1}{\sqrt{2\pi dt}\sigma}e^{-\frac{(dX_t-\kappa(X_t-\mu)dt)^2}{2\sigma^2dt}},$$ and
\begin{align*}
P_{open}(X_{t+\Delta t} &= X + \Delta X \,\vert\, \mathrm{max}_{s \in [t + \Delta t, T]} X_s = M) \\
& = \int_{-\infty}^M P(X_{t+\Delta t} = X + \Delta X \,\vert\, \mathrm{max}_{s \in [t + \Delta t, T]} X_s = M) db',
\end{align*}
where the right-hand side is as in section 3.

$X_{\mathrm{min}}$ may again be solved for numerically, as in section 3. In actual implementation, the last value $X_T$ cannot be given as $b$, and the incremental process assumes an interval of positive length between $t$ and $T$. For this one point we set $X_T = X_{T-\Delta t}$ for simplicity.

\section{Brownian motion with drift and geometric Brown-ian motion conditioned on extrema}


We consider the case of drift and geometric Brownian motion, particularly important in quantitative finanace as it underlies the Black-Scholes model. The case for bridges is in fact trivial, since a Brownian motion with drift constrained to bridge conditions is in fact independent of its drift. To see this, consider a Brownian motion with drift $c$ on $[0, T]$ from $a$ to $b-cT$, and add a drift $ct$. Then by the SDE (equation \ref{sde_bb}) we have $$d(X_t+ct) = \frac{b-X_t}{T-t}dt + cdt + dW_t = \frac{b - (X_t+ct)}{T-t}dt + dW_t,$$ which is exactly the same as the original SDE in $X_t+ct$. Therefore, the problem of constructing a Brownian bridge with drift to a given extremum is already solved in (3). A geometric Brownian bridge $X^{geom}_t$ on $[0, T]$ from $a$ to $b$ with maximum $M$ and drift $c$ (with $a, b, M$ all necessarily positive) is then precisely (and in law) given by $e^{Y_t},$ where $Y_t$ is a Brownian bridge on $[0, T]$ from $\mathrm{ln}\,a$ to $\mathrm{ln}\,b$ with maximum $\mathrm{ln}\,M$.

Our construction of an open-ended Brownian motion $X_t$ with given variance $\sigma^2$ and drift $c$ (and correspondingly an open-ended geometric Brownian motion) constrained to a given maximum $M$ over $[0, T]$ relies again on a Bayesian incremental method. If we assume an initial value $a$, a volatility $\sigma$, drift coefficient $c$, then the transformation $X_t \mapsto (X_t-a)/\sigma$ induces an initial value 0, a drift coefficient $\frac{c}{\sigma}$ and a maximum $\frac{M-a}{\sigma}$. The problem is thus equivalent to simulating a Brownian motion $X_t$ with drift $c$ and volatility 1, constrained to a maximum $M$ over $[0, T]$ with initial value 0, which we assume without loss of generality.

The density of the maximum of such a Brownian motion with drift may be found in \cite{BO}, from an application of Girsanov's theorem:
\begin{equation}
p(\mathrm{max}_{s\in[0, t]} X_t \in dM) = \sqrt{\frac{2}{\pi T}}e^{-(M-cT)^2/2T}-2ce^{2cM}N(-\frac{M+cT}{\sqrt{T}})\mathbbm{1}_{\{M\ge0\}}dM. \label{density_max_drift}
\end{equation}

Integrating and setting the value at infinity to 1, we find that the probability that $X_t$ is bounded by $M$ on $[0, T]$ is \begin{equation}
P(\mathrm{max}_{s\in [0, t]} X_t \le M) = e^{2cM}(\mathcal{N}(\frac{M+cT}{\sqrt{T}})-1) + \mathcal{N}(\frac{M-cT}{\sqrt{T}}).\label{prob_bound_drift}
\end{equation}

The unconditioned density of an increment $\Delta X$ after a given increment $\Delta t$ is given directly from the defining SDE by
\begin{equation}
p(X_{t+\Delta t } =  X+ \Delta X) = \frac{1}{\sqrt{2\pi \Delta t}\sigma}e^{\frac{-(\Delta X-c\Delta t)^2}{2\sigma^2\Delta t}}. \label{density_increment_drift}
\end{equation}

As in section 4, we follow algorithm \ref{alg:bayesian_increments} with the exception that pMaxLeft and pBoundLeft are computed as for the bridge process over $[t, t+\Delta t]$ from $X$ to $X+\Delta X$ (recalling that these are independent of $c$ and are the same as in section 2.2), and pMaxRight and pBoundRight are given for the open-ended  case. That is, we define pBoundLeft from equation \ref{prob_bound_bb}, pBoundRight from equation \ref{prob_bound_drift}, pMaxLeft from equation \ref{density_max_bb}, pMaxRight by equation \ref{density_max_drift}, and DensityIncrementUnconstrained from equation \ref{density_increment_drift}.

In our implementation, we find $X_{\mathrm{min}}$ numerically by solving for $$P(\mathrm{min}_{s\in [t_0, T]}X_s \le X_{\mathrm{min}})< \delta.$$ We use a Brent solver to minimize $$\vert  e^{2cX_{\mathrm{min}}}(1-\mathcal{N}(\frac{-X_{\mathrm{min}}-cT}{\sqrt{T}})) + (1-\mathcal{N}(\frac{-X_{\mathrm{min}}+cT}{\sqrt{T}})) - \delta \vert.$$ As in section 4, the last value $X_T$ cannot be given by a predefined value $b$, and the incremental process assumes an interval of positive length between $t$ and $T$. For this one point, we assume for simplicity that $X_T = X_{T-\Delta t}$.

We used our implementation to simulate Wiener processes over $[0, 2]$ starting at 3 with drift $c = 1$ and volatility $\sigma = 2$. Numerically, we set $N_{\mathrm{timesteps}} = 100$, $\delta = 10^{-3}$, $\varepsilon = 0.1$, and $L = 1000$. We show examples below.

\begin{figure}[H]
\centering
\includegraphics[width=0.4\textwidth, angle=0]{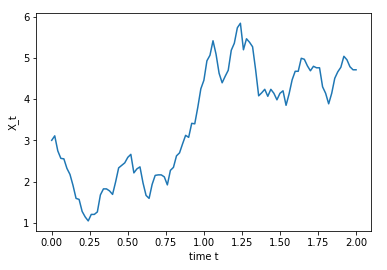} \includegraphics[width=0.4\textwidth, angle=0]{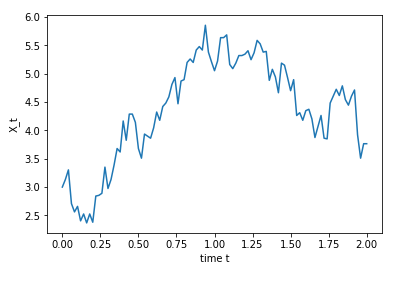}\\
\includegraphics[width=0.4\textwidth, angle=0]{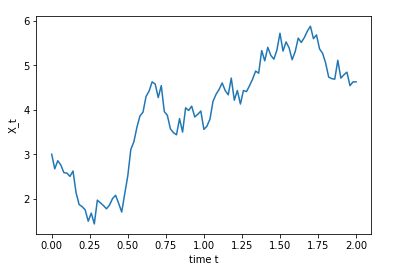} \includegraphics[width=0.4\textwidth, angle=0]{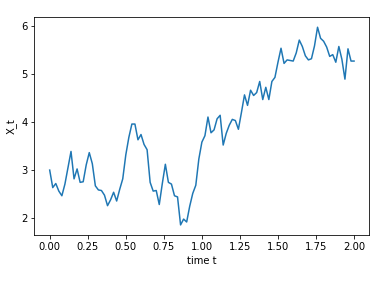}\\
\caption{Plots of Wiener processes with drift $[0, 1]$ with initial value 3, conditioned on a maximum of 6, with drift coefficient 1 and $\sigma = 2$}
\end{figure}

Note that we did not rectify the maximum value. The plots show expected behavior. Mean computation time was 21.69 seconds despite the decrease in $L$ from 10000 to 1000. Setting $L$ to 10000, the mean computation time was 169.47 seconds.\footnote {using a 2.6GHz Intel Core i5-7300U CPU} The density of the location of the maximum (unconstrained on the maximum value) is found in \cite{KS} to be
\begin{equation}
p(\mathrm{argmax}_{s\in[0, T]} X_s = \theta) = 2(\frac{e^{-c^2\theta/2}}{\sqrt{2\pi\theta}} + c\mathcal{N}(c\sqrt{\theta}))(\frac{e^{-c^2(T-\theta)/2}}{\sqrt{2\pi(T-\theta)}} - c\mathcal{N}(-c\sqrt{T-\theta})).\label{density_argmax_drift}\end{equation} This density is the same for $c/\sigma$ in our case.

We see similar distributions for 10 bins, after randomly generating the maximum according to equation \ref{density_max_drift}. We plot both the experimental and theoretical densities for 10 bins below.
\begin{figure}[H]
\centering
\includegraphics[width=0.4\textwidth, angle=0]{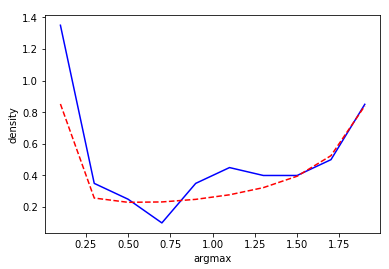}\\
\caption{Density of the location of the maximum of a Wiener process constrained to a maximum of 6 over $[0, 2]$, with drift coefficient 1, volatility 2 and initial value 3. Solid (blue): Experimental frequency after 100 runs. Dashed (red): Theoretical density based on equation \ref{density_argmax_drift}}
\end{figure}
These appear to agree fairly closely. We conclude that the fit is good and the simulation is successful for the choices made for these numerical parameters.

Finally, we exponentiate to illustrate the case for geometric Brownian motion, which we take over $[0, 2]$, starting at 3, conditioned on a maximum of 6. For the bridge case, we simulate the logarithm with the method of 2.1, and assume a value of 4 at $t = 2$, volatility 2 and exponential drift coefficient 1. We simulate 100 timesteps.

\begin{figure}[H]
\centering
\includegraphics[width=0.4\textwidth, angle=0]{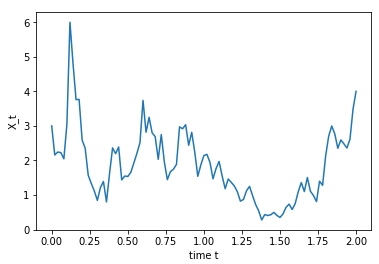} \includegraphics[width=0.4\textwidth, angle=0]{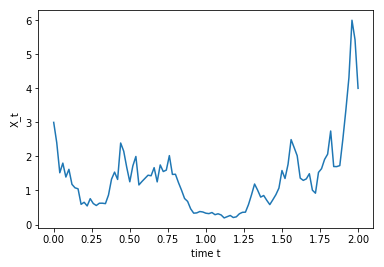}\\
\caption{Plots of Geometric Brownian bridges over  $[0, 1]$ from 3 to 4, conditioned on a maximum of 6, with exponential drift coefficient 1 and volatility $\sigma = 2$}
\end{figure}

For the open-ended case we simulate 100 timesteps with $\delta = 10^{-3}, \varepsilon = 0.1$ and $L = 1000$. First we take drift coefficient 0.2, with volatility 0.1 and 1 respectively:
\begin{figure}[H]
\centering
\includegraphics[width=0.27\textwidth, angle=0]{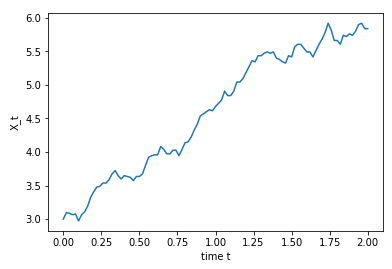} \includegraphics[width=0.27\textwidth, angle=0]{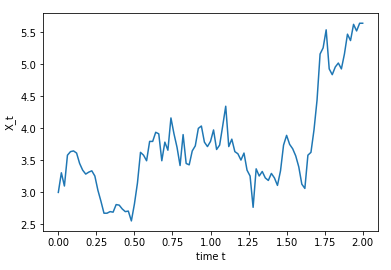} \includegraphics[width=0.27\textwidth, angle=0]{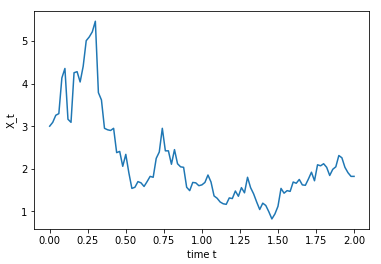}  \\
\caption{Plots of Geometric Brownian motions over $[0, 1]$ with initial value 3, conditioned on a maximum of 6, with exponential drift coefficient 0.2 and from left to right taking $\sigma = 0.1, \sigma = 0.5, \sigma = 1$}
\end{figure}
The simulation meets the maximum condition and the behavior of the simulation as volatility increases as expected.

Increasing the drift coefficient to 0.5, retaining volatility over the same range:
\begin{figure}[H]
\centering
\includegraphics[width=0.27\textwidth, angle=0]{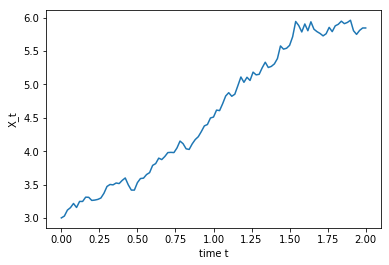} \includegraphics[width=0.27\textwidth, angle=0]{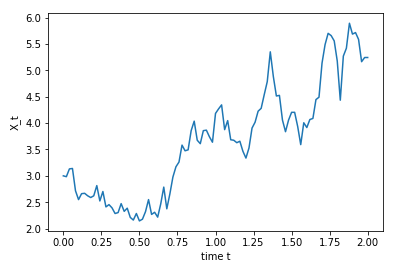} \includegraphics[width=0.27\textwidth, angle=0]{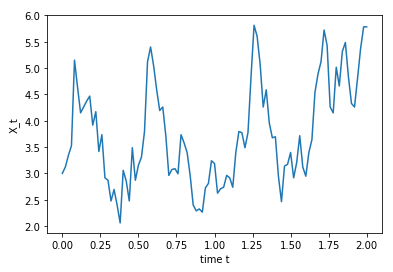}  \\
\caption{Plots of Geometric Brownian motions over $[0, 1]$ with initial value 3, conditioned on a maximum of 6, with exponential drift coefficient 0.5 and from left to right taking $\sigma = 0.1, \sigma = 0.5, \sigma = 1$}
\end{figure}

This construction lends itself to several applications. If the underlying of an option is known to have attained a given maximum, but the current payoff of the option is not known, its value can be estimated by Monte Carlo simulation over geometric Brownian bridge (i.e., Black-Scholes) scenarios conditioned on a given extremum. Barrier option pricing in particular is sensitive to this condition. We will save these applications for a later paper.




\begin{center}
\textbf{DECLARATION OF INTEREST}
\end{center}

The authors alone are responsible for the content and writing of this paper. The information and views expressed by the authors are their own and not necessarily those of Ernst \& Young LLP or other member firms of the global EY organization.

\begin{bibsection}
\begin{biblist}

\bib{AC}{article}{
author = {Alabert, A.}
author = {Caballero, R.}
title = {On the minimum of a conditioned Brownian bridge}
journal = {}
volume = {}
date = {2017}
pages = {}
eprint = {}
}

\bib{BO}{article}{
author = {Beghin, L.}
author = {Orsingher, E.}
title = {On the maximum of the generalized Brownian bridge}
journal = {Lithuanian Mathematical Journal}
volume = {39}
issue = {2}
date = {1999}
}

\bib{BCP}{article}{
author = {Bertoin, J.}
author = {Chaumant, L. }
author = {Pitman, J.}
title = {Path transformations of first passage bridges}
journal = {Elect. Comm. in Prob.}
volume = {8}
date = {2013}
pages = {155-166}
}

\bib{BPR}{article}{
author = {Bertoin, J.}
author = {Ruiz de Chavez, J.}
author = {Pitman, J.}
title = {Constructions of a Brownian path with a given minimum}
journal = {Elect. Comm. in Prob.}
volume = {4}
date = {1999}
pages = {31-37}
}

\bib{BSAW}{article}{
author = {Buchin, Kevin}
author = {Sijben, Stef} 
author = {Arseneau-Robar, Jean} 
author = {Willems, Erik}
year = {2012},
month = {11},
pages = {},
title = {Detecting movement patterns using Brownian Bridges},
isbn = {978-1-4503-1691-0},
journal = {GIS: Proceedings of the ACM International Symposium on Advances in Geographic Information Systems},
doi = {10.1145/2424321.2424338}
}


\bib{CMO}{article}{
author = {Caflisch, Russel}
author = {Morokoff, William}
author = {Owen, Art}
year = {1997},
month = {02},
pages = {},
title = {Valuation of Mortgage Backed Securities Using Brownian Bridges to Reduce Effective Dimension},
volume = {1},
journal = {Journal of Computational Finance},
doi = {10.21314/JCF.1997.005}
}

\bib{C}{article}{
author = {Corlay, S.}
title = {Properties of the Ornstein-Uhlenbeck bridge}
journal = {Elect. Comm. in Prob.}
volume = {8}
date = {2013}
pages = {155-166}
eprint = {}
}

\bib{CFS}{article}{
author = {Csaki, E.}
author = {Foldes, A.}
author = {Salminen, P.}
title = {On the joint distribution of the maximum and its location for a linear diffusion}
journal = {Annales de l'I.H.P.}
section = {B}
volume = {23}
issue = {2}
date = {1987}
pages = {179-194}
eprint = {}
}

\bib{DMS}{article}{
author = {Le Doussal, P.}
author = {Majumdar, S.}
author = {Schehr, G.}
title = {Periodic Airy process and equilibrium dynamics of edge fermions in a trap}
journal = {Annals of Physics}
volume = {383}
pages = {312-345}
date = {2017}
eprint = {arXiv:1702.06931}
}

\bib{I}{article}{
author = {Imhof, J. P.}
title = {Density factorizations for Brownian motion, meander and the three-dimensional Bessel process, and applications}
journal = {J. Appl. Probab.}
volume = {21}
pages = {500-510}
date  = {1984}
}

\bib{KS}{book}{
author = {Karatzas, I.}
author = {Shreve, S.}
title = {Brownian Motion and Stochastic Calculus}
publisher = {Springer}
date = {1987}
}

\bib{K}{book}{
author = {Kolkiewicz, Adam},
year = {2018},
month = {07},
pages = {345-362},
title = {Efficient Monte Carlo for Diffusion Processes Using Ornstein–Uhlenbeck Bridges},
isbn = {978-3-319-91435-0},
doi = {10.1007/978-3-319-91436-7_19}
}

\bib{LW}{article}{
author = {Lin, Junyi},
author = {Wang, Xiaoqun},
year = {2008},
month = {04},
pages = {109-133},
title = {New Brownian bridge construction in quasi-Monte Carlo methods for computational finance},
volume = {24},
journal = {Journal of Complexity},
doi = {10.1016/j.jco.2007.06.001}
}

\bib{L}{book}{
author = {Linz, P.}
title = {Analytical and Numerical Methods for Volterra Equations}
date = {1985}
publisher = {SIAM}
}

\bib{LK}{article}{
author = {Lipton, A.}
author = {Kaushansky, V.}
title = {On the First Hitting Time Density of an Ornstein-Uhlenbeck Process}
journal = {awaiting publication}
volume = {}
date = {2018}
pages = {}
eprint = {arxiv:1810.02390}
}

\bib{M}{article}{
author = {Mazzoli, A.}
title = {Constraint Ornstein-Uhlenbeck bridges}
journal = {Journal of Math. Phys.}
volume = {58}
date = {2017}
}


\bib{S2}{article}{
author = {Shtaif, Mark},
year = {2003},
month = {02},
pages = {51-53},
title = {The Brownian-bridge method for simulating polarization mode dispersion in optical communications systems},
volume = {15},
journal = {Photonics Technology Letters, IEEE},
doi = {10.1109/LPT.2002.805879}
}

\end{biblist}
\end{bibsection}

\end{document}